\newcommand{\Rmnum}[1]{\expandafter\@slowromancap\romannumeral #1@}
\theoremstyle{definition}
\newtheorem{defn}{Definition}[section]
\newtheorem{prop}{Proposition}
\date{}
\begin{document}
\title{{\fontsize{20}{20}\selectfont An Information-theoretic Model for Knowledge Sharing in Opportunistic Social Networks} \thanks{This work was funded in part by a Google Faculty Research Award.}}

\author{\large Mai ElSherief$^*$, Tamer ElBatt$^{\dagger\diamondsuit}$, Ahmed Zahran$^{\dagger^\diamondsuit}$, Ahmed Helmy$^{\ddagger}$  \\ [.1in]
\small  \begin{tabular}{c} 
$^*$Department of Computer Science, University of California, Santa Barbara, USA.\\
$^\dagger$Wireless Intelligent Networks Center (WINC), Nile University, Giza, Egypt.\\
$^\diamondsuit$Faculty of Engineering, Cairo University, Giza, Egypt.\\
$^\ddagger$The Department of Computer and Information Science and Engineering,
University of Florida, Gainesville, USA. \\
\end{tabular} }

\maketitle

\begin{abstract}
%\boldmath
In this paper we establish fundamental limits on the performance of knowledge sharing in opportunistic social networks. In particular, we introduce a novel information-theoretic model to characterize the performance limits of knowledge sharing policies. Towards this objective, we first introduce the notions of knowledge gain and its upper bound, knowledge gain limit, per user. Second, we characterize these quantities for a number of network topologies and sharing policies. This work constitutes a first step towards defining and characterizing the performance limits and trade-offs associated with knowledge sharing in opportunistic social networks. Finally, we present numerical results characterizing the cumulative knowledge gain over time and its upper bound, using publicly available smartphone data. The results confirm the key role of the proposed 
model to motivate future research in this ripe area of research as well as new knowledge 
sharing policies.
\end{abstract}
% no keywords
%\smallskip
\noindent \textbf{Keywords}: Modeling, information theory, opportunistic social networks, fundamental limits, numerical results.
\IEEEpeerreviewmaketitle
%************************************************************************
\vspace{-0.2 cm}

\section{Introduction}
%Recent studies, e.g. \cite{wbr}, point out a significant increase
%in the number of mobile subscribers, approaching $7$ billion worldwide. 
The recent surge in mobile devices, complemented by a plethora of wireless 
communication standards, have inspired novel networking paradigms and services.
However, fully understanding and exploiting the social structure of 
mobile users remains a daunting challenge. Earlier
social studies, e.g., Homophily {[}Lazarsfeld and Merton (1954){]},
have shown that people tend to have similarities with others in close
proximity. 
%In such clustered communities of interest, users tend to communicate, interact and trust each %other \cite{csi}. 

The wide proliferation of resource-rich smartphones renders them tightly coupled to their users, bearing a wealth of behavioral data, e.g., locations, social networks, online shopping, etc., which infer information about the user's preferences and interests. Thus, there has been growing interest in leveraging this data to open new frontiers and enrich the user's life experiences \cite{Eagle}. An instance of this interaction also prevails in crowd sourcing applications which may affect the real-time user behavior, e.g., Waze and Google maps provide indicators for traffic congestion and road accidents which advise the mobile users to alter their routes.

Inspired by the tight coupling between smartphones and users' behaviors, we pose the following fundamental question: Can we capitalize on the wealth of knowledge and life experiences of people we encounter throughout our lives and may have common interests, yet we do not know? The proposed model caters to this question via an envisioned class of applications, coined {\it opportunistic recommendation systems} (ORS), where users capitalize on others' knowledge based on their mere co-existence and backed by homophily. The utility of ORS stems from extending our classic day-to-day ``physical'' recommendation exchanges, from people we know and encounter throughout the day, to ``virtual'' exchanges with users we opportunistically encounter and do not know (yet may have things in common according to homophily) and even to people we have never encountered, through the concept of knowledge sharing/forwarding, explored in this paper.

It is worth noting that similarity-based opportunistic social networks could serve as the basis for a variety of novel mobile services, e.g., trust establishment, targeted advertisement, friend finders, and location/similarity-based services. Furthermore, ORS is expected to spur a plethora of novel smartphone applications serving large public venues, e.g., museums, theme parks, shopping malls, sports events and fairgrounds.
%
%This work is motivated by the problem of knowledge sharing in the envisioned 
%opportunistic recommendation systems which 
%exhange tips between opportunistic ``similar'' users, inspired by homophily.
%Hence, smart phones can further enrich the user knowledge-base
%via highly personalized applications, e.g., location-based services,
%targeted advertisement and social networking among many others. 

In this paper, we establish fundamental limits for opportunistic social 
networks, particularly focusing on knowledge sharing among ``similar'' users. 
In \cite{mai14}, we addressed the problem of assessing pair-wise user similarity
in mobile societies, as a pre-requisite for knowledge sharing. We examined a 
number of classic similarity metrics, e.g., cosine, as well as novel ones 
for non-temporal (vector) and temporal (matrix) user profiles. Our major findings 
in \cite{mai14} assert that temporal metrics are generally more strict in declaring similarity, 
compared to non-temporal ones. Second, vectorized metrics, e.g., the introduced 
vectorized-cosine, constitute a low-complexity approach towards realizing the, typically, more complex temporal similarity on mobile devices. Finally, information-theoretic metrics, e.g., Hellinger distance, hold great promise for quantifying similarity between probability distribution user profiles. 
%we proposed generalized, temporal and non-temporal, user profiles 
%beyond mere location \cite{profilecast,csi}. These profiles capture mobile 
%users daily activities and are instrumental for assessing mobile users' 
%similarity \cite{mai14}. Similarity assessment typically depends on the profile 
%type and application context. Classic metrics exist for vector-based profiles 
%such as cosine and Pearson correlation \cite{sim}. Distance metrics from 
%probability theory and statistics, e.g., Hellinger distance \cite{sm}, can be 
%leveraged to assess similarity between probability distribution profiles, as 
%proposed in \cite{Mai}. On the contrary, very few metrics are introduced for 
%temporal profiles, e.g., singular value decomposition (SVD) based metrics 
%\cite{csi}.

In \cite{mobihoc}, the authors study the problem of content dissemination in 
opportunistic social networks. Their main result shows that high contact rate, 
non-social nodes (i.e. rarely found in ``temporal communities'') are mostly 
responsible for efficient content dissemination. However, unlike this work, the 
model adopted in \cite{mobihoc} is not information-theoretic. 

Infrormation-theoretric models have been employed in other areas, e.g., cooperative data compression and distributed source coding for data gathering in multi-hop wireless sensor networks (WSNs) with spatial correlations, e.g., \cite{ramachandran,servetto,marco,govindan,elbatt}. However, the prime focus of this research thrust of research is to eliminate redundancies 
among the possbily correlated sensor measurements to transport sensor data with no/minimal 
redundnacy using minimal energy and bandwidth resources \cite{govindan,servetto}. The joint entropy of the random variables representing the individual sensors as the sources of data constitutes the lower bound on the traffic volume generated by the sensors, where source coding algorithms try to achieve. On the other hand, our objective in this work is fundamentally different. We establish fundamental limits, using basic information theoretic constructs, on the maximum knowledge avaiable for a user to reap in a given opportunistic encounter of similar users. As defined later in the sequel, the knowledge gain limit of an arbitrary user constitutes the upper bound on the amount of knowledge a user can reap in a given opportunistic encounter and is characerized by the joint entropy of the random variables modling the indivudal knowledge each user bears in diverse areas of life. Furthermore, the nodes are stationary in data gathering WSNs and communications is multi-hop, whereas in our problem setting nodes are generally mobile and communications is limited to single-hop, with the possibility of forwarding the knowledge acquired from previous encounters.
%concepts of knowledge capacity and gain proposed here are not addressed in \cite{mobihoc}.

Our main contribution in this paper is a novel information-theoretic
model for knowledge sharing in opportunistic social networks. 
First, we introduce the new notions of {\it Knowledge Gain} and 
its upper bound {\it Knowledge Gain Limit}, per user.
%, as key quantities for formally studying knowledge sharing policies. 
Second, we establish fundamental limits and unveil key insights for 
diverse network topologies and sharing policies and validate 
our theoretical findings using publicly available smartphone user traces
under plausible scenarios. The use of modeling abstractions to study formation, 
dynamics and evolution of social networks is not new. For instance, graph theory has been employed extensively 
in social networks to model patterns of networks, clustering, homophily and basic concepts like centrality and connectedness, e.g., \cite{jackson,graph1,graph3}. In addition, random graph theory has been employed to model social network formation, evolution and growth, e.g., \cite{erdos,bollobas,wasserman,graph2}, among other topics. However, to the best of the authors' knowledge, employing information-theoretic tools to model knowledge sharing in opportunistic, mobile social networks has not been explored before.

Our prime focus in this paper is to introduce the information-theoretic model, establish fundamental limits, as opposed to designing and implementing specific knowledge sharing schemes, which constitute an interesting topic of future research. This work lays out a 
mathematical foundation for analyzing and contrasting existing and future 
knowledge sharing and delay-tolerant forwarding policies in opportunistic social networks.
%
%Third, we identify the Hellinger distance as a strong candidate for
%quantifying similarity due to the probability distribution structure
%of the proposed profile. Finally, we propose a novel temporal similarity
%metric, based on matrix vectorization, to capitalize on the richness
%in the temporal dimension and circumvent the limitations of prior
%temporal similarity metrics \cite{csi}.

The rest of this paper is organized as follows. In Section II, we introduce 
the system model and underlying assumptions. In Section III, we present
some preliminaries. Afterwards, we introduce the information-theoretic model 
and quantify the performance of candidate knowledge sharing policies in Section IV. 
In Section V, we present key results, based on realistic smartphone data, confirming 
our theoretical findings. Finally, conclusions are drawn and potential directions for 
future research are pointed out in Section VI.
%
%***************************************************************************************
\vspace{-0.1 cm}
\section{System Model}
%
%The notion of a "network" here, that is, nodes exchanging information, 
%is established solely based on pair-wise similarity, in the sense defined in Section 2.
%Thus, if a group of users in an opportunistic encounter, are all dissimilar, then there is no 
%network, since no knowledge sharing (information) will follow. The scenario of interest to us %is the one that involves a subset of similar users which triggers tips exchange. Accordingly, %we focus from now on a group of nodes where all nodes are pair-wise similar.
%
\vspace{-0.1 cm}  
We model an opportunistic encounter of $M$ similar users as a wireless ad hoc network where 
all nodes are peers and wish to establish communications to share ``knowledge'' and recommendations in various life categories. To study the problem at hand, we assume that all users trust each other and that each user is pair-wise similar to all other users in the sense of \cite{mai14}, as an example. Each user has its own profile vector, $\bar{P}$, modeled as a probability distribution across $v$ generic life categories, e.g., arts, sports, shopping, etc., chosen by the profile designer based on target application(s), e.g., \cite{odp}. Each user is assumed to have a table of recommendations that stores \textit{tips} (knowledge) for sharing with other similar users, e.g., upcoming event(s), best sellers, site visits, etc. We assume that the users leverage short-range wireless communications with fixed transmission power (i.e. circular disk model), e.g., WiFi Direct or Bluetooth. Hence, interference and medium access issues are assumed resolved using these, or other, protocols and lie out of the scope of this work.

In this paper, we wish to address two fundamental questions pertaining to knowledge sharing:\\
%\begin{itemize}
{\bf 1.} For an arbitrary user $i$, what is the maximum amount of knowledge available for this user (knowledge limit) in a given similarity-based opportunistic encounter (network)?\\
{\bf 2.} For user $i$, what is the amount of knowledge gain that is achievable, i.e. the user can reap from similar users in an opportunistic social network, using a specific knowledge sharing policy?
\vspace{-0.2 cm}
\subsection{Knowledge Limit and Knowledge Gain}
In this section, we introduce two key concepts that are fundamental to the proposed model, namely the knowledge limit and knowledge gain.
\begin{defn}
The Knowledge Gain Limit ($KL_i$) is defined, for an arbitrary user $i$, as the maximum amount of knowledge that is available for user $i$ to extract from similar users in a given network.
\end{defn}
\begin{defn}
The Knowledge Gain ($KG_i$) is defined, for an arbitrary user $i$, as the amount of knowledge user $i$ can gain from similar users in a given network, using a specific knowledge sharing policy.
\end{defn}
\vspace{-0.2 cm}
It is straightforward to notice that $KG_i \le KL_i$ since the knowledge limit constitutes the upper bound on the knowledge that can be reaped, irrespective of the sharing policy. Inspired by the probability distribution definition of the user profile, we argue that probability- and information-theoretic tools would prove useful for modeling and analyzing the system at hand.

In the next section, we introduce the formal definition of the knowledge gain per encounter. For modeling convenience, we assume that the tips a user has follow the same probability distribution $\bar{P}$ (i.e. have the same mix) as the profile for that user. This does not only facilitate the mathematical analysis but can also be justified by the observation that users, typically, tend to have more tips in life categories they are more interested in, as reflected by their profile vectors.
%
%\begin{itemize}
%\item Knowledge Capacity for a given user (KC): defined as an upper bound on the maximum amount of new information that a user can extract from the similar users in a given network.
%\item Knowledge Gain for a given user (KG): defined as the amount of new information that a user extracted from the similar users in a given network via a specific dissemination policy. Axiomatically, the Knowledge Gain is always less than or equal to the Knowledge Capacity.
%\end{itemize}

%Since our model is probabilistic in nature (i.e the users' profiles are represented by PMFs), we use Information theory tools as one way to quantify both aforementioned notions.
%
\vspace{-0.1 cm}
\section{Preliminaries}
\subsubsection{The Knowledge Gain per encounter}

We recall from information theory that the Entropy of a discrete-valued random variable $X$, denoted $H(X)$, represents a measure of the ``uncertainty'' in this random variable which also represents the amount of information this random variable bears \cite{cover}. Given our assumption that the user tips follow the same probability distribution as the user profile, user tips can also be modeled as a discrete-valued random variable, $X$. Accordingly, $H(X)$ 
quantifies the amount of information (or knowledge)\footnote{We use the terms Knowledge and Information interchangebly in this paper.} this user has. This simple model opens room for quantifying the newly introduced concepts of knowledge limit and gain.

To better illustrate the concepts, we first consider a toy example of an ``opportunistic encounter'' that involves only two users within the wireless communication range of each other. The two users have tips probability distribution vectors, denoted $X$ and $Y$. Assume users $X$ and $Y$ opportunistically meet and are deemed similar\footnote{We abuse notation and use tips PMFs, $X$ and $Y$, to refer also to the users.}, possibly using metrics in \cite{mai14} and, hence, should exchange ``knowledge'', that is, informative tips. Based on simple entropy relationships, 
%depicted in the Venn diagram shown in Fig. ~\ref{fig:KGEncounter}, 
we distinguish three types of tips that are key to our discussion:\\
%
%Using our Assumption stated earlier that the probabilistic characteristics of the tips are the same as the probabilistic characteristics as his/her behaviour. We can use the information-theoretic entropy to help us quantify both Knowledge Gain and Knowledge Capacity.\\
%If we have two users $X$ and $Y$ who happen to opportunistically meet according to the similarity based opportunistic network that was described earlier and are deemed similar, they can now exchange tips.
%Using Information theory, we can quantify three important quantities as indicated in Fig. ~\ref{fig:KGEncounter}
%\begin{itemize}
1. Tips that user $X$ has, but $Y$ does not: given by $H(X|Y)$.\\
2. Tips that user $Y$ has, but $X$ does not: given by $H(Y|X)$.\\
3. Tips that both users have: given by $I(X;Y)$, where $I(X;Y)$ is the mutual information between $X$ and $Y$.
%\end{itemize}
%It is straightforward to map the first type of tips, which constitute the KG for user $Y$, 
%to the conditional entropy, $H(X|Y)$. Similarly, the KG for user $X$ is given by $H(Y|X)$.

Thus, the knowledge gained by user $X$ from $Y$ can be defined as
I\begin{equation}
KG(X) = H (Y|X) = H(X,Y) - H(X)
\end{equation}
where $H(X,Y)$ is the joint entropy of the two random variables representing the users tips vectors.
%
%If $X$ and $Y$ are deemed similar and they exchange tips. From the point of view of $Y$, the new information he/she gained is the first part mentioned earlier. This part is $H(X|Y)$ i.e. the Entropy (uncertainty/information content) of user $X$ given $Y$. So, the Knowledge Gain (KG) for user $Y$ in this case is $H(X|Y)$. Along the same lines, for user $X$, $KG=H(Y|X)$.
%
%Using some basic definitions from Information theory, we can rewrite the KG for user $X$ as $H(X,Y)-H(X)$. This can be directly interpreted as the amount of joint info that both $X$ and $Y$ have minus the amount of info that $X$ has.
The third type of tips (common to both users), characterized as the mutual information $I(X;Y)$, constitutes the ``communication overhead'' since it is exchanged despite the fact that it does not contribute to increasing the knowledge of $X$ or $Y$. This perfectly 
agrees with our assumption that the two users know nothing about each other, when they meet opportunistically for the first time and, hence, this overhead is unavoidable.
%
%\begin{figure}[!tp]
%  \centering
%    \includegraphics[width=5.5cm ,height=4cm]{figures_png/vennSimple}
%    \caption{Characterizing knowledge gain and overhead for users $X$ and %$Y$.}\label{fig:KGEncounter}
%    \end{figure}

It is worth noting that the knowledge gain limit for user $X$ (or $Y$), in this simple example of two users, is equal to the knowledge gain. 

\subsubsection{The Knowledge Gain Limit}
Based on the information-theoretic definitions of the KG and KL for two users established in the previous section, we generalize the definition to characterize the knowledge limit for user $X_1$, without loss of generality, in an opportunistic encounter with $M-1$ other users deemed similar to $X_1$, as follows: 
%Extending the insights that we formed about the KG quantity from the last subsection, we can now generalize the insights to get the general form for the Knowledge Capacity for a single user in a similarity-based opportunistic network of $M$ users.
%We can define the Knowledge Capacity(KC) for a single user $X_1$ as follows,
\vspace{-0.1 cm}
\begin{equation}
		KL(X_1) = H(X_1, X_2, X_3, ......, X_M) - H(X_1)
\end{equation}
%
%		$ | X_2, X_3, ......, X_M$ are similar to $X_1$
%\vspace{-0.6 cm}
which can be written as
\vspace{-0.1 cm}
\begin{equation}
KL(X_1) = H(X_2|X_1) + H(X_3|X_2,X_1) + .....
\nonumber
\end{equation}
\begin{equation}
\hspace *{-0.15 cm} + H(X_M|X_{M-1}, ......, X_1).
%\nonumber
\end{equation}

%$ | X_2, X_3, ......, X_M$ are similar to $X_1$.
\vspace{-0.2 cm}
Thus, (2) asserts that the maximum amount of knowledge that user $X_1$ can extract from the network is simply the joint entropy (knowledge) that all users have, after removing any redundant knowledge, which is represented by the joint entropy, $H(X_1, X_2, X_3, ......, X_M)$, less the amount of information that user $X_1$ already has, that is, $H(X_1)$. It is worth noting that the KL characterization in (2) and (3) is general enough, valid for all network topologies and is independent of any knowledge sharing policy.

%This powerful framework sets the stage in the next section to establish fundamental limits 
%and conduct analysis of diverse settings under two 
%sample knowledge sharing policies.
%%Using these information theoretic tools and defining two basic information dissemination %%policies, we delve into different network configurations and attempt to apply the %%aforementioned definitions on them in order to further study them.
%%%%%%%%%%%%%%%%%%%%%%%%%%%%%%%%%%%%%%%%%%%%%%%%%%%%%%%%%%%%%%%%%%%%%%

\vspace{-0.2 cm}
\section{Knowledge Sharing in Opportunistic Social Networks: Fundamental Limits}

We now utilize the basic definitions introduced in the previous section to establish the KL for an arbitrary user in diverse scenarios as well as characterize its KG, under two candidate knowledge sharing policies:\\
{\bf 1.} Send my tips only, or ``{\it Send Mine Only}'', whereby a user sends only own 
tips to a similar, directly encountered user.\\ 
{\bf 2.} Forward my tips and others, or ``{\it Forward Mine Plus Others}'', whereby a user forwards his/her own tips along with those acquired so far from others in previous encounters.

%Throughout the network model that we explained earlier, we can think of two extreme tips' exchange policies per encounter. Of course as explained earlier, the tips exchange occur only if the two users are deemed similar.
%\begin{itemize}
%\item Forward My Tips Only: or "Forward Mine Only" In this policy if the two users are similar, each one sends his/her own tips only.
%\item Forward My Tips and others: or "Forward Mine Plus Others'" In this policy if the two users are similar, each one sends his/her tips and the tips that he/she acquired from previous encounters.
%\end{itemize}
It is worth noting that these two policies are mere examples to illustrate the concept, however, other policies can be introduced and analyzed using the proposed model.
%Of course one can think of other policies or derivative policies from the previous two policies.
For instance, a user may forward his/her own tips along with a subset of others' tips based on some criteria. This gives rise to a family of knowledge sharing policies that deserve a comprehensive analysis, to assess their merits and potential trade-offs, which lies out of the scope of this work. 
%Another hybrid policy would be forward My Tips only with probability $p$ and forward mine plus others with probability $1-p$.
%Our main focus in this work is to analyse the network dynamics for the two main dissemination policies: Forward Mine Only and Forward Mine Plus Others.
%
%%%%%%%%%%%%%%%%%%%%%%%%%%%%%%%%%%%%%%%%%%%%%%%%%%%%%%%%%%%%%%%%%%%%%%%%%%%%%%%%%%%
%\section{Similarity Based Network Analysis for Different Scenarios}

Next, we shift our attention to quantifying the KL and KG achievable by the two aforementioned knowledge sharing policies, under a variety of opportunistic social network settings. In particular, we consider two topology scenarios, namely all nodes are within the communication range of each other (i.e. single-hop scenario) and multi-hop scenarios, where some nodes may lie out of the communication range of others. In this paper, we limit our attention to fixed topology encounters (i.e. stationary or quasi-stationary users) and leave user mobility for future work.
%
%In addition, we consider two network topology scenarios within an encounter, namely fixed topology (i.e. stationary or quasi-stationary users) and time-varying topology caused by the user's portability within the same encounter. 
%
%The latter case is based, in the performance analysis study, on real traces gathered throughout a conference, particularly, Infocom 2005 \cite{infocom}.  

%The network model described earlier may/may not incorporate a Mobility component. We need to study both stationary and mobile models since the nodes may be stationary or moving around in the network area. We will study both directly and indirectly connected networks as the proximity/topology of the nodes play an important role since the communication that occurs here is point to point. 
%%%%%%%%%%%%%%%%%%%%%%%%%%%%
%%Check
%%For simplicity here, we will assume all the nodes in the network are mutually similar to focus on the Knowledge related concepts.
%%%%%%%%%%%%%%%%%%%%%%%%%%%%
%
\vspace{-0.2 cm}
%\subsection{Fixed Topology Opportunistic Networks}
%
%\subsubsection{Data used for Experimentation}
%{\it A. Data used for Experimentation}\\
%For the Stationary Similarity-based Opportunistic networks and modeling the user as a Random Variable, we will be conducting our experiments using LiveLab data \cite{data} described earlier. The network consists of $20$ nodes;  $B00$, $B02$, $B03$,
%$B04$, $B05$, $B06$, $B07$, $B08$, $B09$, $B10$, $B11$, $D00$, $D01$, $D02$, $D03$, $D04$, $D05$, $D06$, $D07$ and $D08$. We shall investigate the cumulative knowledge gain after each encounter and whether the gain will reach the knowledge limit or not and how many encounters shall it take to reach the limit.
%
%\subsubsection{ Stationary, Single Hop Network:}
%%%%%%%%%%%%%%%%%%%%%%%%%%%%%%%%%%%%%%%%%%%%%%%%%%%%%%%%%%%%%%%%%%%%%%%%%%%%
\subsection{Single-hop Networks}
Under this setting, the users may be stationary, or quasi-stationary, yet, any
node remains always one-hop away from all other nodes. Thus, the nodes' movement does not alter the network topology, which always remains fully connected. 
%Moreover, all nodes lie within the communication range of each other, that is, each node is a single hop away from all other nodes. 
For this setting, we can easily characterize the knowledge gain limit, as in (3), and, further, prove that the knowledge gain will always achieve the limit, under loose delay constraints. This is in complete agreement with intuition since any node can take turns to exchange tips with all other nodes directly reachable. Thus, ``all'' knowledge that is available for a user in this network, can be fully reaped. 
%The proof is omitted due to space limitations.
%
%Simply, since all nodes are in the same communication range of all other nodes, any node will be able to encounter any other node and extract its information, and if we assume that all nodes are mutually similar, the additive knowledge Gain will simply yield the Knowledge Gain Limit and this can be shown shortly.\\
The achievability result for the {\it Send Mine Only} (SMO) policy is established by the following proposition.
%
%\begin{itemize}
%\item{Stationary Directly Connected (Single Hop) Network - Forward Mine Only:}
%in this network all nodes are in each others' range and if two users are deemed similar, each user exchanges only his/her tips.
%We now present a theorem and its proof suggesting that all nodes can achieve knowledge capacity.
%

\begin{prop}
For single-hop networks, an arbitrary node can achieve its knowledge gain limit using the {\it SMO} policy.
\end{prop}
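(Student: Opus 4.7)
The plan is to exhibit, under the SMO policy in a fully-connected (single-hop) topology with loose delay constraints, a schedule of pairwise encounters that causes user $X_1$ to accumulate information whose total content equals $H(X_1, X_2, \ldots, X_M) - H(X_1)$, thereby matching the expression for $KL(X_1)$ in (2)--(3). Since $KG(X_1) \le KL(X_1)$ holds by definition, the proof reduces to establishing the reverse inequality $KG(X_1) \ge KL(X_1)$, i.e., achievability.

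First, I would exploit the single-hop assumption: every node $X_j$ with $j \ne 1$ lies directly within $X_1$'s communication range, so $X_1$ can participate in $M-1$ pairwise encounters with the other nodes in some order, say $X_2, X_3, \ldots, X_M$. Under SMO, in the encounter with $X_j$, node $X_j$ transmits exactly its own tips, which from $X_1$'s perspective is a realization of the random variable $X_j$ distributed according to $X_j$'s profile. The loose delay assumption ensures all $M-1$ exchanges can be completed before any reap accounting is performed.

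Second, I would tally the incremental new information gathered at each step. After $X_1$ has observed $(X_1, X_2, \ldots, X_{j-1})$, receiving $X_j$'s tips contributes $H(X_j \mid X_1, X_2, \ldots, X_{j-1})$ new bits of knowledge to $X_1$, by the same reasoning used in the two-user calculation behind equation (1). Summing over $j = 2, \ldots, M$ and applying the chain rule for entropy yields
\[
KG(X_1) = \sum_{j=2}^{M} H(X_j \mid X_1, \ldots, X_{j-1}) = H(X_1, X_2, \ldots, X_M) - H(X_1),
\]
which equals $KL(X_1)$ as characterized in (3). Combined with $KG(X_1) \le KL(X_1)$, this proves the achievability claim.

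The main obstacle I anticipate is the bookkeeping involved in extending the two-user definition in equation (1) to the sequential $M$-user setting: specifically, justifying that $X_1$'s cumulative knowledge after receiving the raw tips of $X_2, X_3, \ldots, X_j$ individually is informationally equivalent to observing the joint variable $(X_2, \ldots, X_j)$, so that the conditional entropies telescope cleanly via the chain rule. This step rests on the standard information-theoretic fact that observing a sequence of random variables is equivalent to observing them jointly, after which the equality above becomes routine and order-independent.
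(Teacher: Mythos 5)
Your proposal is correct and follows essentially the same route as the paper's proof: sequence the $M-1$ pairwise encounters, credit each encounter with the conditional entropy $H(X_j \mid X_1,\ldots,X_{j-1})$, and telescope via the chain rule to recover $KL(X_1)$ from (3). You are somewhat more explicit than the paper about why the increments sum correctly (the equivalence of sequential and joint observation) and about invoking $KG \le KL$ for the converse direction, but the underlying argument is the same.
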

\vspace{-0.1 cm}
\begin{proof}
Without loss of generality, we assume that node $X_1$ encounters other nodes in an ascending order of their IDs. Under the {\it Send Mine Only} policy, 
%shown in Fig.~\ref{SSHPNet}, 
the cumulative knowledge gain for node $X_1$, $KG(X_1)$, after receiving tips from all other nodes $X_2, X_3, X_4,...., X_M$ in turn, is given by $H(X_2|X_1) + H(X_3|X_2,X_1) + .....+ H(X_M|X_{M-1}, ......, X_1)$, which is the same as the $KL(X_1)$ in (3). 
%It is straightforward to notice that this summation is itself $KL(X_1)$, given in (3). 
The same argument can be applied to all other nodes in the network which proves the result.
%$H(X_1, X_2, X_3, ......, X_M) - H(X_1)= KL(X_1)$. 
%The same can also be applied to any other node and along the same lines we can prove that it can achieve its knowledge capacity.\\
%
%\begin{figure}[!tp]
%\centering
%    \includegraphics[width=8cm ,height=5.5cm]{figures_png/SSHPNet}
%    \caption{$X_1$ encounters enumerated.}\label{fig:KGEncounter}
%    \label{SSHPNet}
%\end{figure}
%
\end{proof}

%\begin{cor}
%If there is, at least one node, that is unreachable from the rest of the network and has at least one unique tip (that is not available to any other node), then $KG(X_1) < KL(X_1)$.
%\end{cor}

%The proof is straightforward and is left as an exercise to the reader.\\
%
%%%%%%%%%%%%%%%%%%%%%%%%%%%%%%%%%%%%%%%%%%%%%%%%%%%%%%%%%%%%%%%%%
%\begin{theorem}
%If there is at least one node isolated from the rest of the a directly connected network that has at least one new tip (piece of information), $KG(X_1) \leq KL(X_1)$.\\
%\end{theorem}
%\begin{proof}
%%\textit{Proof:}
%Without loss of generality, we will assume that node $X_1$ will encounter other nodes in the network in an increasing order of node id and we assume a Forward Mine Only Policy. If node $X_M$ is isolated and can't be reached. So, the cumulative knowledge gain for node $X_1$, $KG(X_1)$, according to meeting nodes $X_2, X_3, X_4,...., X_{M-1}$ will be $$H(X_2|X_1) + H(X_3|X_2,X_1) + .....+ H(X_{M-1}|X_{M-2}, ......, X_1)$$ and this summation is missing the last term $H(X_M|X_{M-1}, ......, X_1)$ and since the entropy value is always greater than or equal zero so 
%$KG(X_1) <  KL(X_1) $.
%\end{proof}
%
As indicated earlier, one of the fundamental questions in our study is how long does it take a user to attain the knowledge limit, if it is at all attainable. This is directly related to the number of exchanges that a user performs to attain the KL, as confirmed by simulations which reveal insteresting insights discussed later in Section V.A. Under the {\it Send Mine Only} policy and assuming that each node has at least one unique tip to contribute to the knowledge gain limit, it is straightforward to show that the worst-case number of exchanges needed for an arbitrary node to attain the KL is $M-1$, that is, $O(M)$.
%%%%%%%%%%%%%%%%%%%%%%%%%%%%%%%%%%%%%%%%%%%%%%%%%%%%%%%%%%%%%%%%%%%%%%%%%%%%%
%\item{Stationary Directly Connected (Single Hop) Network - Forward Mine Plus Others':}

In the rest of this section, we shift our attention to quantify the KG and time-to-achievability of the {\it Forward Mine Plus Others} (FMPO) sharing policy, in single-hop networks. Thus, a user shares not only its own tips but also tips collected from previous encounters, denoted by the subscript $(.)_p$. We prove in the following proposition that the knowledge gain limit is also achievable using the FMPO policy under loose delay constraints.
%in this network all nodes are in each others' range and if two users are deemed similar, each user exchanges his/her tips and the tips he/she has extracted from previous encounters. We model previous tips by subscript $p$.

%For the Stationary Directly Connected (Single Hop) network, and using Mine Plus Others' forwarding scheme, we can also show that any node can achieve the Knowledge Capacity.\\
%\textit{Example:}In a network of $4$ nodes ($X_1, X_2, X_3, X_4$), if we use a Mine Only forwarding scheme, $X_1$ will have to make $3$ encounters with $X_2, X_3$ and $X_4$ to reach its Knowledge Capacity. If we use a Mine Plus Others' forwarding scheme as shown in Fig. $X_1$ will reach the capacity in $2$ encounters.\\
%

\begin{prop}
For single-hop networks, an arbitrary node achieves the knowledge gain limit using the FMPO policy.
\end{prop}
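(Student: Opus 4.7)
The plan is to reduce the FMPO achievability claim to the SMO achievability already established in Proposition 1. The central observation is that FMPO dominates SMO per encounter: the sender transmits its own tips (exactly what SMO sends) together with any tips it has acquired in previous encounters, and delivering additional information can never reduce the recipient's accumulated knowledge.

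First, I would fix the same encounter schedule as in Proposition 1, namely $X_1$ encountering $X_2, X_3, \ldots, X_M$ in order of increasing ID with no other encounters occurring in between. Under this schedule, when $X_k$ meets $X_1$, node $X_k$ has not yet interacted with any other node, so its set of previously acquired tips is empty. Thus $X_k$ transmits exactly its own tips and FMPO coincides with SMO on this particular run. Invoking Proposition 1, $X_1$ accumulates
\begin{equation}
H(X_2|X_1) + H(X_3|X_2, X_1) + \cdots + H(X_M|X_{M-1}, \ldots, X_1),
\end{equation}
which equals $KL(X_1)$ by the chain rule of entropy, as in (3).

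Second, to address more general schedules where other encounters may be interleaved and $X_k$ may in fact have nonempty acquired tips when meeting $X_1$, I would appeal to the standard information-theoretic fact that conditioning reduces entropy, $H(A|B,C) \leq H(A|B)$. Any forwarded tips that $X_1$ already holds contribute only to the mutual information overhead discussed in Section III, while genuinely new tips can only enlarge $X_1$'s stored information. Hence at the end of the schedule $X_1$'s buffer carries information whose entropy is at least $H(X_1, X_2, \ldots, X_M)$, which yields $KG(X_1) \geq KL(X_1)$. Combined with the trivial upper bound $KG(X_1) \leq KL(X_1)$ noted right after Definition 2, equality follows.

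The main technical point, and the only real obstacle, is the monotonicity argument in the general-schedule case: one must track $X_1$'s running buffer across encounters and argue carefully that forwarding already-known tips cannot undo previously gained knowledge, while forwarding novel tips always helps. Since the specific ordering above already settles the achievability claim as stated under loose delay constraints, the cleanest proof route is simply the scheduled reduction to Proposition 1, and the general-schedule extension serves mainly to confirm the robustness of the result under richer encounter patterns.
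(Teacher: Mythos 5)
Your proof is essentially correct, but it takes a genuinely different route from the paper's. The paper keeps the previous-encounter vectors $\vec{X_{ip}}$ explicitly in the gain expression, writing $KG(X_1)$ as a sum of terms $H(X_k, \vec{X_{kp}}\,|\,\cdots)$, expanding each via the chain rule into $H(X_k|\cdots) + H(\vec{X_{kp}}|X_k,\cdots)$, and arguing that the forwarded-tips terms are redundant (zero conditional entropy) once the conditioning already covers the nodes they originated from, so the sum collapses to the KL expression in (3). You instead choose a degenerate schedule in which no other encounters occur before $X_1$ meets each node, so every $\vec{X_{kp}}$ is empty, FMPO coincides with SMO on that run, and Proposition 1 applies verbatim. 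Your reduction is logically sufficient for the bare achievability claim (achievability only requires exhibiting one schedule that works), and it is arguably cleaner. What the paper's approach buys, however, is the visible role of the $\vec{X_{ip}}$ terms: that is precisely what lets the authors argue immediately afterward that FMPO attains the KL \emph{faster} than SMO (fewer encounters), and what sets up the overhead comparison in Proposition 3. Your schedule erases the distinctive feature of FMPO entirely, so it cannot support those follow-on claims. Your second, general-schedule argument is closer in spirit to the paper's, but as you acknowledge it is left at a sketch level; the assertion that the buffer entropy reaches $H(X_1,\ldots,X_M)$ presupposes that the schedule delivers every node's tips to $X_1$, which should be stated as following from the single-hop (fully connected) topology and loose delay constraints rather than from ``conditioning reduces entropy.''
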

\vspace{-0.2 cm}
\begin{proof}
We proceed along the lines of Proposition 1 and give an outline of the proof due to space limitations. Without loss of generality, we assume that node $X_1$ encounters all other nodes in an ascending order of their IDs. Thus, $KG(X_1)$ based on encountering nodes $X_2, X_3, X_4,...., X_M$ in turn, is given by
\begin{equation}
KG(X_1) = H(X_2, \vec{X_{2p}}|X_1) + H(X_3, \vec{X_{3p}}|X_2, \vec{X_{2p}}, X_1) 
\nonumber
\end{equation} 
\begin{equation}
\hspace*{0.9 cm} + H(X_4, \vec{X_{4p}}|X_3, \vec{X_{3p}}, X_2, \vec{X_{2p}}, X_1)+ .....
\nonumber
\end{equation}
\begin{equation}
\hspace*{1.3 cm} + H(X_M, \vec{X_{Mp}}|X_{M-1}, \vec{X_{(M-1)p}} ,......, X_1),
\end{equation}
\noindent where $\vec{X_{ip}}$ captures the previous encounters of node $X_i$. It can 
be shown that each joint entropy term in (4) can be expanded, 
e.g., $H(X_2, \vec{X_{2p}}|X_1)$ becomes $H(X_2|X_1) + H(\vec{X_{2p}}|X_2, X_1)$, where 
the second term (previous encounter tips of a node) would be redundant in some cases (acquired from earlier encounters) and, hence, contributes zero to the KG. This reduces (4) to the KL in (3) and proves the result.
\end{proof}
It should be noted that once the conditioning, in the conditional entropy terms in the RHS, accommodates all nodes in the network, the incremental gain becomes zero and the node achieves its knowledge limit. In essence, the role of previous encounters (appearing in the conditional entropy terms) is the sole contributor to the FMPO policy attaining the KL faster than SMO, which will be shown in Section V. Apparently, this does not come for free since there is a fundamental trade-off between the cumulative KG after a number of encounters and the associated communication overhead which warrants attention in future research, especially in multi-hop networks. We prove next that the communication overhead of FMPO is greater than or equal to SMO, in single-hop networks.

\begin{prop}
For single-hop networks, the communication overhead under FMPO is greater than or equal to SMO.
\end{prop}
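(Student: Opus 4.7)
The plan is to establish the inequality by directly comparing, encounter by encounter, the information content transmitted under the two policies, and then leveraging the fact (already proved in Propositions 1 and 2) that both policies attain the \emph{same} cumulative knowledge gain, namely $KL(X_1)$, in a single-hop network. Formally, I would define the communication overhead incurred by node $X_1$ as the total entropy of all payloads it receives across the $M-1$ encounters minus the useful knowledge it actually reaps. Since both SMO and FMPO end up delivering exactly $KL(X_1)$ bits of new knowledge to $X_1$, comparing overheads reduces to comparing the total transmitted entropy under the two policies.

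First, I would fix without loss of generality the same encounter order $X_2, X_3, \ldots, X_M$ for both policies, so that the useful-knowledge increments can be matched term-by-term with the decomposition in (3). Next, I would examine the $k$-th encounter in isolation. Under SMO, node $X_k$ transmits only its own tips, contributing an entropy of $H(X_k)$ to the traffic seen by $X_1$. Under FMPO, node $X_k$ transmits its own tips together with everything it acquired in its prior encounters, $\vec{X_{kp}}$, contributing $H(X_k, \vec{X_{kp}})$. By the elementary inequality $H(X_k, \vec{X_{kp}}) = H(X_k) + H(\vec{X_{kp}} \mid X_k) \ge H(X_k)$, the FMPO payload in every single encounter is at least as informative (in entropy) as the SMO payload.

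Summing the per-encounter inequality across $k = 2, 3, \ldots, M$ yields that the total transmitted entropy under FMPO is at least that under SMO. Subtracting the common useful-knowledge term $KL(X_1)$ (invoking Propositions 1 and 2 for single-hop achievability by both policies) gives the desired inequality on overheads. The argument then extends to every other node by symmetry of the single-hop topology, establishing the proposition.

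The main obstacle I anticipate is pinning down a clean information-theoretic definition of ``communication overhead'' that makes the per-encounter comparison rigorous: one has to be careful that $\vec{X_{kp}}$ may have nontrivial correlations with what $X_1$ already knows, so the split between ``useful'' and ``redundant'' bits received in a given encounter is policy-dependent even though the \emph{totals} agree. The cleanest way around this is to work at the level of totals, as above, rather than trying to match overhead contributions encounter-by-encounter; the monotonicity $H(X_k, \vec{X_{kp}}) \ge H(X_k)$ then does all the work.
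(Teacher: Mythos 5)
Your argument is correct, but it takes a genuinely different route from the paper. The paper fixes a single generic encounter between $X$ and $Y$, defines the overhead of each policy directly as a mutual information between the transmitted payload and the receiver's accumulated knowledge ($I(X;Y,\vec{Y_p})$ for SMO versus $I(X,\vec{X_p};Y,\vec{Y_p})$ for FMPO), expands both via $I(A;B)=H(A)+H(B)-H(A,B)$ and the chain rule, and reduces the difference to $H(\vec{X_p}|X)-H(\vec{X_p}|X,Y,\vec{Y_p})\ge 0$ by ``conditioning reduces entropy.'' You instead work with totals over the whole run: overhead equals total transmitted entropy minus total useful knowledge, the useful totals coincide at $KL(X_1)$ by Propositions 1 and 2, and the per-encounter payload entropies satisfy $H(X_k,\vec{X_{kp}})\ge H(X_k)$. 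Your route is more elementary (it needs only monotonicity of joint entropy, not the mutual-information identity), and your closing remark correctly identifies why the totals-level comparison is the clean way to sidestep the policy-dependent split between useful and redundant bits. The trade-offs: your argument only establishes the inequality for the completed process, since at intermediate times FMPO may have delivered strictly more useful knowledge than SMO and the subtraction of a common term is no longer available; and it leans on Propositions 1 and 2, whereas the paper's per-encounter mutual-information comparison is self-contained and bounds the redundancy of every individual exchange regardless of whether either policy has yet attained the limit.
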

\vspace{-0.2 cm}
\begin{proof}
Assume two users, $X$ and $Y$, encounter each other. Denote the vector of previous encounters for $X$ and $Y$ by $\vec{X_p}$ and $\vec{Y_p}$, respectively.

Under the SMO policy, the communication overhead that $X$ incurs is the common knowledge (mutual information) between what $X$ sends (which is its knowledge only) and $Y$'s
knowledge so far which is given by 
\begin{equation}
OH(X)_{SMO}=I(X;Y,\vec{Y_p}).
\label{overheadX}
\end{equation}
Similarly, the communication overhead from the perspective of user $Y$ is $OH(Y)_{SMO}=I(Y;X,\vec{X_p})$.

Under FMPO, the communication overhead is the same for both users, $X$ and $Y$, and is given by
\begin{equation} 
OH(X)_{FMPO}=OH(Y)_{FMPO}=I(X,\vec{X_p};Y, \vec{Y_p}).
\label{overheadXY}
\end{equation}

From information theory, the mutual information between two random variables $A$ and $B$ can be written as
\begin{equation}
I(A;B)=H(A) + H(B) - H(A,B).
\label{mutInfoEqn}
\end{equation}
%\vspace{-0.4 cm}
Applying ~\eqref{mutInfoEqn} on ~\eqref{overheadX} yields
\begin{equation}
OH(X)_{SMO}= H(X)+H(Y,\vec{Y_p})-H(X,Y,\vec{Y_p}).
\label{Eqn.4}
\end{equation}
Applying ~\eqref{mutInfoEqn} on ~\eqref{overheadXY} yields
\begin{equation}
\hspace*{-1 cm} OH(X)_{FMPO}=OH(Y)_{FMPO}= H(X,\vec{X_p})
\nonumber
\end{equation}
\begin{equation}
\hspace*{1.3 cm} +H(Y,\vec{Y_p})-H(X,\vec{X_p},Y,\vec{Y_p}).
\label{Eqn.3}
\end{equation}
Subtracting ~\eqref{Eqn.4} from ~\eqref{Eqn.3} yields 
\begin{equation}
\hspace*{-1 cm} OH(X)_{FMPO}-OH(X)_{SMO}=H(X,\vec{X_p})-H(X)
%H(Y,\vec{Y_p})-H(Y)-H(X,\vec{X_p},Y,\vec{Y_p})+H(X,\vec{X_p},Y).
\nonumber
\end{equation}
\begin{equation}
\hspace*{3 cm} -H(X,\vec{X_p},Y,\vec{Y_p})+H(X,Y,\vec{Y_p}).
\label{Eqn.5}
\end{equation}
Since $H(A,B) = H(A) + H(B|A) = H(B) + H(A|B) $, ~\eqref{Eqn.5} can be re-written as
\begin{equation}
OH(X)_{FMPO}-OH(X)_{SMO}=H(X)+H(\vec{X_p}|X)-H(X)
\nonumber
\end{equation}
\begin{equation}
-[H(\vec{X_p}|X,Y,\vec{Y_p})+H(X,Y,\vec{P_p})] + H(X,Y,\vec{Y_p})
\nonumber
\end{equation}
which reduces to 
\begin{equation}
OH(X)_{FMPO}-OH(X)_{SMO}=H(\vec{X_p}|X)- H(\vec{X_p}|X,Y,\vec{Y_p})
\label{lasteqn}
\end{equation}
\begin{equation}
\hspace*{0.6 cm} \ge 0
\label{ineq}
\end{equation}
where the inequality in~\eqref{ineq} follows since conditioning reduces entropy. This proves the result.
\end{proof}
\vspace{-0.2 cm}
\subsection{Fixed Topology Multi-hop Networks}
Under this setting, we assume the network topology is connected and time-invariant where some nodes are not directly reachable from each other, i.e. multi-hop paths will be followed for exchanging 
some tips. In this case, the role of the knowledge sharing policy stands out and 
affects whether a user can/cannot attain the KL.
%
%in this kind of network, not all nodes are in the same communication range of each other. If we take node $X_1$ for example, then in this kind of network it will have a subset of the $M$ nodes, say $N$ where $N<M$, in the same communication range or one hop away and the other $M-N$ nodes not reachable by $X_1$.\\
%Here the information dissemination policy plays an important role in whether or not a node attains the Knowledge Capacity.
%
%\begin{itemize}
%\item{Stationary Indirectly Connected (Multi-Hop) Network - Forward Mine Only:}

Next, we quantify the performance, and trade-offs, of the two aforementioned knowledge sharing policies, namely SMO and FMPO. In case of
%in this network shown in Fig.~\ref{SMHPNet}
SMO, the knowledge gain achieved by node $X_1$ is limited by the neighborhood size, $N<M$. This renders the KG under this policy strictly less than the KL for multi-hop networks, even under loose delay constraints. The following proposition establishes this result.

\begin{prop}
For fixed topology, multi-hop networks, the {\it Send Mine Only} knowledge sharing policy will not attain the knowledge gain limit, that is, $KG(X_1)<KL(X_1)$ iff $N < M$.
\end{prop}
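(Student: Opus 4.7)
The proposition is an iff, so the plan naturally splits into two directions. I interpret $N$ as the size of $X_1$'s \emph{closed} one-hop neighborhood (i.e.\ $X_1$ plus its direct neighbors), so that $N=M$ corresponds to the single-hop case and $N<M$ means at least one node in the network is more than one hop from $X_1$. This reading is the only one consistent with Proposition~1 and with the statement being non-trivial.

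For the forward direction ($N<M\;\Rightarrow\;KG(X_1)<KL(X_1)$), the key observation is structural: under SMO every node transmits only its own tips and never relays, so after any sequence of opportunistic encounters $X_1$ can only incorporate tips originating at nodes it directly meets. The set of such nodes is exactly the one-hop neighborhood, of size $N-1$ excluding $X_1$. Relabeling these neighbors $X_{i_1},\dots,X_{i_{N-1}}$, the cumulative knowledge gain is, by the same chain-rule argument as in Proposition~1,
\begin{equation}
KG(X_1)=H(X_{i_1},X_{i_2},\dots,X_{i_{N-1}}\mid X_1).
\nonumber
\end{equation}
Comparing to the definition
$KL(X_1)=H(X_1,\dots,X_M)-H(X_1)=H(X_2,\dots,X_M\mid X_1)$,
the difference is precisely the conditional entropy of the unreached nodes given $X_1$ and its neighbors. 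Invoking the standing assumption (already used to count exchanges under SMO) that every node contributes at least one tip unique to itself, this residual conditional entropy is strictly positive, giving $KG(X_1)<KL(X_1)$.

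For the reverse direction I argue by contrapositive: if $N=M$, then every other node is a one-hop neighbor of $X_1$, so $X_1$'s view of the network is single-hop. Proposition~1 then applies verbatim and yields $KG(X_1)=KL(X_1)$. Combining the two directions gives the stated equivalence.

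The only subtle step is the strict inequality in the forward direction, which is where the proof can silently fail: without the uniqueness-of-tips assumption, an unreachable node's tips could in principle be fully redundant with those of the reached neighbors, in which case the residual conditional entropy would vanish and the inequality would only be weak. I would therefore state the uniqueness assumption explicitly at the start of the proof, mirroring the wording used earlier when bounding the number of exchanges, and then the chain-rule comparison above closes the argument cleanly.
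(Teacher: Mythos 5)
Your proof is correct and its core step---decomposing $KG(X_1)$ via the chain rule over the one-hop neighbors and observing that the sum stops at $N$ while $KL(X_1)$ contains further conditional-entropy terms for the $M-N$ unreached nodes---is exactly the argument the paper gives. You go beyond the paper's proof in two respects, both to your credit. First, the paper only argues the direction $N<M\Rightarrow KG(X_1)<KL(X_1)$ and never addresses the converse required by the stated ``iff''; your contrapositive appeal to Proposition~1 for the case $N=M$ (together with the standing fact $KG\le KL$) closes that gap. Second, the paper simply asserts that the missing terms are ``positive,'' whereas strictness genuinely requires the assumption---stated in the paper only earlier, when counting exchanges under SMO---that each node contributes at least one tip not deducible from the others; you correctly identify this as the one place the inequality could degenerate to equality and make the assumption explicit. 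Your reading of $N$ as the closed one-hop neighborhood size is also the right disambiguation of the statement.
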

%\textit{Proof:}\\
%
\vspace{-0.2 cm}
\begin{proof}
Without loss of generality, we assume that node $X_1$ communicates with direct neighbors in an ascending order of their IDs. The cumulative knowledge gain for $X_1$, $KG(X_1)$, according to communications with neighbors $X_2, X_3, X_4,...., X_{N}$ is given by $H(X_2|X_1) + H(X_3|X_2,X_1) + .....+ H(X_{N}|X_{N-1}, ......, X_1)$. It is worth noting that the summation of positive conditional entropy terms is limited to $N<M$ nodes. It misses other positive terms involving the $M-N$ non-neighbors of $X_1$. Hence, it directly follows that $KG(X_1)<KL(X_1)$, which proves the result.
%
%and this summation is missing the terms $H(X_{N+1}|X_{N}, ......, X_1)$ till $H(X_{M}|X_{M-1}, ......, X_1)$ and since these terms are entropy terms whose values are always greater than or equal zero so.
\end{proof}
%

%It is worth noting that the special case of $N=M$, for all nodes, simply reduces to the single-hop network case where we have shown in Section IV.A that the KL is achievable using both knowledge sharing policies.
%
%The situation differs for example for any node that has all nodes in its range i.e., $N=M$. For this node, capacity will be achieved (all missing terms from the last proof will be added and the gain reaches capacity). 
%
%%%%% An interesting, and somewhat surprising, insight that will prevail later in the paper is that nodes mobility could be leveraged to achieve the knowledge capacity, even if $N<M$.

%\begin{figure}[!tp]
%\centering
%    \includegraphics[width=8cm ,height=5.5cm]{figures_png/SMHP(MO)}
%    \caption{Fixed Topology Multi-Hop Network.}\label{fig:SMHP(MO)}
%    \label{SMHPNet}
%\end{figure}
%===========================================================================================
%\item{Stationary Indirectly Connected (Multi Hop) Network - Forward Mine Plus Others':}

Next, we focus on the performance of FMPO for fixed topology multi-hop networks. As expected, forwarding others tips opens room for any node to achieve its KL under loose delay constraints, even when $N<M$. The following proposition formally establishes this result. 
%we now study this network using a Forward Mine Plus Others' policy. If we assume multiple encounters with the same node is allowed and the neighboring nodes for $X_1$ collected the knowledge of non-neighbouring nodes for $X_1$, then $X_1$ can attain its knowledge capacity. We now present this in the next theorem.
%

\begin{prop}
For fixed topology, multi-hop networks, an arbitrary node can achieve the knowledge gain limit using the FMPO knowledge sharing policy.
\end{prop}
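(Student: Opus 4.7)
The plan is to leverage connectedness of the fixed topology together with the recursive nature of FMPO to show that, after sufficiently many exchange rounds, every tip originating at any node $X_j$ eventually appears in the accumulated knowledge pool of $X_1$. Once that happens, the cumulative sum of incremental conditional entropies collected by $X_1$ collapses to the RHS of (3), which is $KL(X_1)$.

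First, I would set up an induction on hop distance. Let $d(X_1,X_j)$ denote the shortest-path distance from $X_1$ to $X_j$ in the (connected, time-invariant) topology $G$, and let $D = \mathrm{diam}(G)$. The base case $d=1$ is exactly Proposition 2 applied locally: after $X_1$ exchanges with each direct neighbor $X_\ell$ in some order, its accumulated pool contains $\{X_\ell : d(X_1,X_\ell)=1\}$ in full. For the inductive step, assume that after $k$ ``rounds'' of all-pairs neighbor exchanges under FMPO, every node $X_i$ has absorbed the tips of all nodes within distance $k$ of itself. Any node $X_j$ at distance $k+1$ from $X_1$ sits adjacent to some $X_\ell$ at distance $k$ from $X_1$; by the induction hypothesis $X_\ell$ already carries $X_j$'s tips in its accumulated vector $\vec{X_{\ell p}}$, so its next FMPO transmission to a neighbor one hop closer to $X_1$ relays them onward. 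After at most $D$ rounds, every tip has reached $X_1$.

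Next, I would translate this reachability statement into the entropy accounting of (4). Writing $KG(X_1)$ under FMPO as a telescoping sum of conditional entropies of the form $H(X_j,\vec{X_{jp}} \mid \text{everything } X_1 \text{ has heard so far})$, each such term decomposes via the chain rule into a contribution for $X_j$ itself plus a contribution for its forwarded history. Tips already in the conditioning set contribute exactly zero (not merely $\le$ something), since a deterministic function of the conditioning variables has zero conditional entropy. Thus only the first arrival of each distinct source $X_j$ into $X_1$'s pool contributes positively, and the sum of those first-arrival contributions is exactly $H(X_2\mid X_1)+H(X_3\mid X_2,X_1)+\dots+H(X_M\mid X_{M-1},\dots,X_1)$, matching (3).

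The main obstacle is the bookkeeping for the order in which distinct tips arrive at $X_1$, which in the multi-hop case depends on the topology and the scheduling of pairwise encounters, unlike the clean ascending-ID ordering available in Propositions 1 and 2. I would sidestep this by invoking the symmetry of joint entropy: the RHS of (3) is invariant under any permutation of $\{X_2,\dots,X_M\}$ by the chain rule, so regardless of the actual arrival order produced by FMPO on $G$, the resulting sum of novel incremental conditional entropies equals $KL(X_1)$. Finiteness and termination follow from the $D$-round argument above, completing the achievability claim under loose delay constraints.
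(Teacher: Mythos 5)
Your proof is correct and follows the same basic strategy as the paper's: under FMPO, connectedness of the fixed topology guarantees that tips from non-neighbors are eventually relayed to $X_1$ through its one-hop neighbors, redundant (already-conditioned) terms contribute zero entropy, and the surviving telescoping sum collapses to the chain-rule expansion of $H(X_1,\dots,X_M)-H(X_1)=KL(X_1)$. The paper, however, only sketches this: it splits into the case where the previous-encounter vectors already cover all non-neighbors and the case where they do not, and for the latter merely asserts that ``backed by network connectedness and loose delay constraints, $X_1$ can attain its KL almost surely via recurring pairing with neighbors.'' You supply two pieces of rigor the paper omits. First, your induction on hop distance with the diameter bound $D$ turns the paper's vague ``almost surely, eventually'' into a deterministic guarantee that every tip reaches $X_1$ within at most $\mathrm{diam}(G)$ rounds of all-pairs neighbor exchanges (at the cost of making the round-robin scheduling assumption explicit, which is a reasonable reading of ``loose delay constraints''). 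Second, your observation that the chain-rule expansion of the joint entropy is invariant under permutations of $\{X_2,\dots,X_M\}$ cleanly disposes of the arrival-order bookkeeping that the paper sidesteps by fixing an ascending-ID encounter order. Both additions strengthen rather than alter the argument; the underlying achievability mechanism is identical.
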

%
%\textit{Proof:}
\begin{proof}
We provide an outline of the proof due to space limitations. Without loss of generality, we assume that each node starts off with its own knowledge only and $X_1$ encounters its single-hop neighbors in an ascending order of their IDs, while other neighbors have pair-wise encounters with other nodes in the network. The cumulative knowledge gain for node $X_1$, $KG(X_1)$, after meeting neighboring nodes $X_2, X_3, X_4,...., X_N$ is given by
\begin{equation}
KG(X_1)=H(X_2,|X_1) + H(X_3, \vec{X_{3p}}|X_2,X_1) 
\nonumber
\end{equation}
\begin{equation}
\hspace*{1.4 cm} + H(X_4, \vec{X_{4p}}|X_3, \vec{X_{3p}}, X_2,X_1)+.....
\nonumber
\end{equation}
\begin{equation}
\hspace*{1.3 cm} + H(X_N, \vec{X_{Np}}|\vec{X_{N-1}}, \vec{X_{(N-1)p}} ,......, X_1). 
\end{equation}
At this point, two cases arise. First, if the previous knowledge vectors, $\vec{X_{ip}}~\forall i$, bear the ``forwarded'' tips from all non-neighboring nodes, namely $X_{N+1}, X_{N+2}, ......, X_M$, then it can be shown that the cumulative knowledge gain of $X_1$ becomes
\begin{equation}
KG(X_1)=H(X_1, X_2, X_3, ......, X_M) - H(X_1)= KL(X_1) 
\end{equation} 
\noindent which proves the result. On the other hand, if the previous encounters do not cover knowledge from all non-neighbors, then this implies that node $X_1$ needs more time to attain $KL(X_1)$. Backed by network connectedness and loose delay constraints, $X_1$ can attain its KL almost surely via recurring pairing with neighbors that it has already paired with until it acquires all missing knowledge from nodes out of its communication range. 
%It should be noted that once the conditioning, in the conditional entropy terms in the RHS, accommodate all in the network, the incremental gain becomes zero and the node achieves its knowledge capacity.
%Of course once the right hand side of the given sign has the full list of the nodes in the network, the added gain will be zero and the node will have reached its Knowledge Capacity.
\end{proof}
%\end{itemize}

%On the contrary, if the dissemination policy used in this network is Forward Mine Plus Others we can find best and worst cases scenarios.
%The best case for this situation happens when the $N$ neighbouring nodes have the knowledge (tips) of the non-neighbouring nodes. If that is the case and through using the Mine plus others policy, $X$ could attain the Knowledge Capacity.
%The worst case for this situation happens when the $M-N$ non neighbouring nodes for $X$ can't be reached by the $N$ neighbouring nodes and thus this case falls back to the Mine Only strategy in the last paragraph.
%The average case naturally happens when some of the non-neighbouring nodes are achievable and some aren't and in this case also $X$ can't attain Knowledge Capacity as well.
%%%%%%%%%%%%%%%%%%%%%%%%%%%%%%%%%%%%%%%%%%%%%%%%%%%%%%%%%%%%%%%%%%%%%%%%%%%%%%%%%%%%%5
%\subsubsection{LiveLab Results for Stationary Similarity-based Opportunistic Network }
\vspace{-0.3 cm}
\section{Performance Results}
%\section{Real User Traces and Performance Results}

In this section, we back our theoretical findings with numerical results based on 
smartphone user profile traces \cite{data}.
%%%%%%%%%%%%%%%%%%%%%%%%%%%%%%%%%%%%%%%%%%%%%%%%%%%%%%%%%%%%%%%%%%%%%%%%%%%%%%%%%%%%%%
\vspace{-0.4 cm}
\subsection{Single-hop Networks}
%\subsection{Fixed Topology, Single-hop Similarity-based Networks}
\vspace{-0.1 cm}
%
%After describing the different topologies for the similarity-based Opportunistic Network under investigation, we now present the results after incorporating 
%
In this section, we rely on real user behavior traces. We utilize traces of user interests in 24 life categories for $20$ smartphone users from the LiveLab project \cite{data}. The users are assumed stationary, or quasi-stationary, and, hence, the network topology is a full mesh. In order to quantify the knowledge gain and its limit for an arbitrary user, we pre-process the huge amount of data as follows. First, we compute the joint probability mass function for the 20 users' activities over a period of six months, from September 2010 to February 2011. Towards this end, we monitor the users' activities categorized under the $24$ categories
%\footnote{The 24th category captures the case when a user phone is off or not running any application.},
each second, for six months and record their concurrent activities. Afterwards, we divide by the total duration of the six months to get the joint PMF. Next, we show the cumulative knowledge gain increase as the node under investigation encounters more nodes over time.
%===========================================================================================================
%\begin{itemize}
%\item Results for Stationary Directly Connected (Single-Hop)- Mine Only Network:

First, we present performance results for a single-hop network under the 
SMO policy. For the network of $M=20$ nodes discussed earlier, an 
arbitrary user can achieve the knowledge gain limit within $M-1=19$ encounters. This is shown in Fig.~\ref{fig:B00_SSHOP)} for three arbitrary users, namely $B00$, $B04$ and $D03$. It can be noticed that the cumulative knowledge gain is a non-decreasing function with time. On the other hand, the knowledge gain limit is shown as a horizontal solid line that is generally different from one user to another.
\begin{figure}[!tp]
\includegraphics[width=10cm ,height=5.2cm]{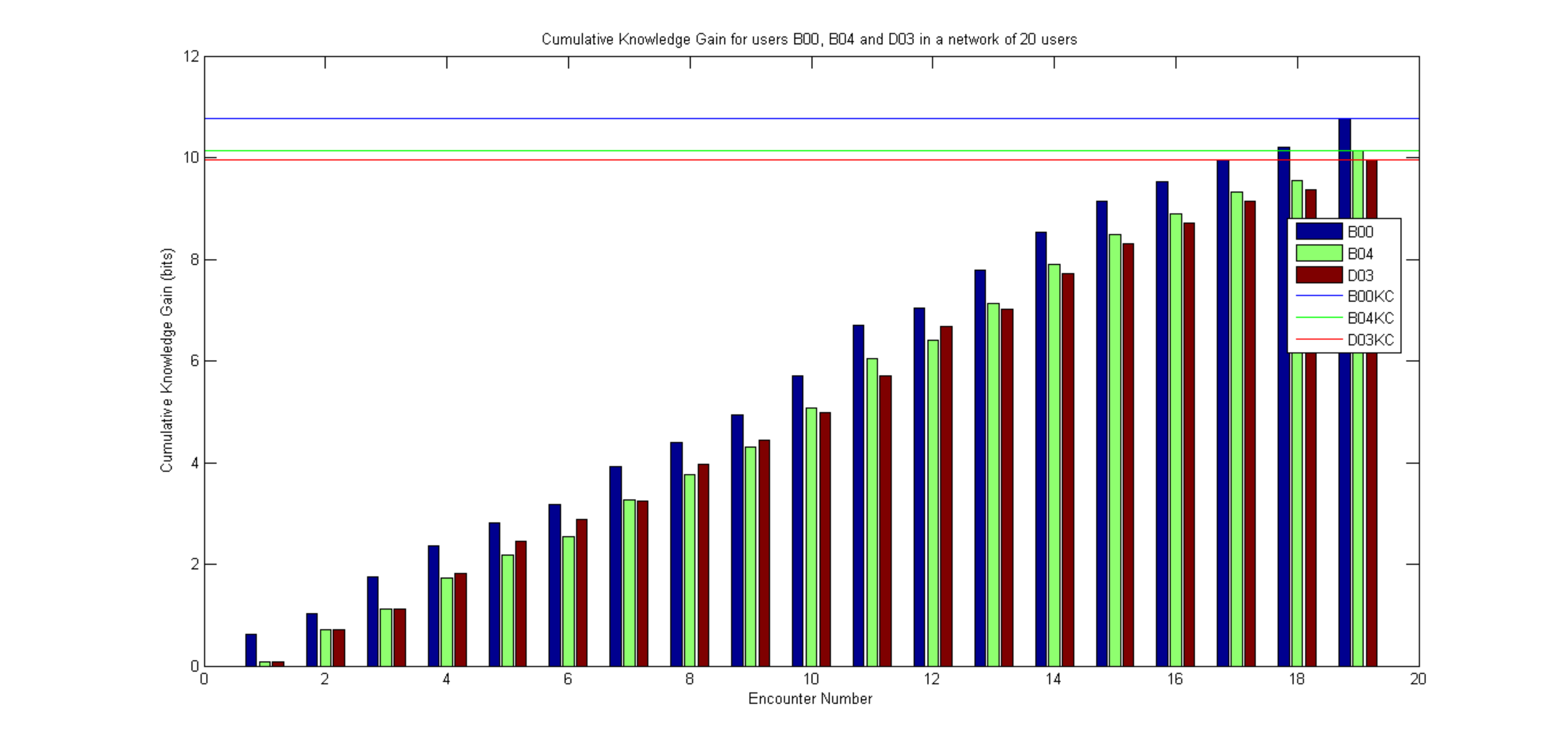}
%    \includegraphics[width=8.1cm ,height=5.6cm]{figures_eps/B02_SSHOP}
%\centerline{(a) \hspace{9 cm} (b)}
%\includegraphics[width=8.1cm ,height=5.6cm]{figures_eps/D00_SSHOP}
%\centerline{(c)}
    \caption{Cumulative knowledge gain for three users in a single-hop network under SMO.}\label{fig:B00_SSHOP)}
\end{figure}
%
%\begin{figure}[!tp]
%\centering
%    \includegraphics[width=12cm ,height=7cm]{figures_eps/B02_SSHOP}
%    \caption{Cumulative knowledge gain for user B02 in a fixed topology single-hop network under (SMO).}\label{fig:B02_SSHOP)}
%\end{figure}
%\begin{figure}[!tp]
%\centering
%    \includegraphics[width=12cm ,height=7cm]{figures_eps/D00_SSHOP}
%    \caption{Cumulative knowledge gain for user D00 in a fixed topology single-hop network under (SMO).}\label{fig:D00_SSHOP)}
%\end{figure}
%================================================================================================================
%\item Results for Stationary Directly Connected (Single-Hop)- Mine Plus Others' Network:

Next, we shift our attention to the same network setting, yet, employing the FMPO policy.
Based on Proposition 2, we show that, for single-hop networks, all nodes achieve the knowledge gain limit using FMPO, yet, faster than SMO, i.e. in less encounters due to sharing the tips of others. This valuable insight is confirmed for the same three users, $B00$, $B04$ and $D03$, in Fig.~\ref{fig:B00_SSHOP(MO)}.
\begin{figure}[!bp]
%\centering
  \centerline{  \includegraphics[width=10cm ,height=5.2cm]{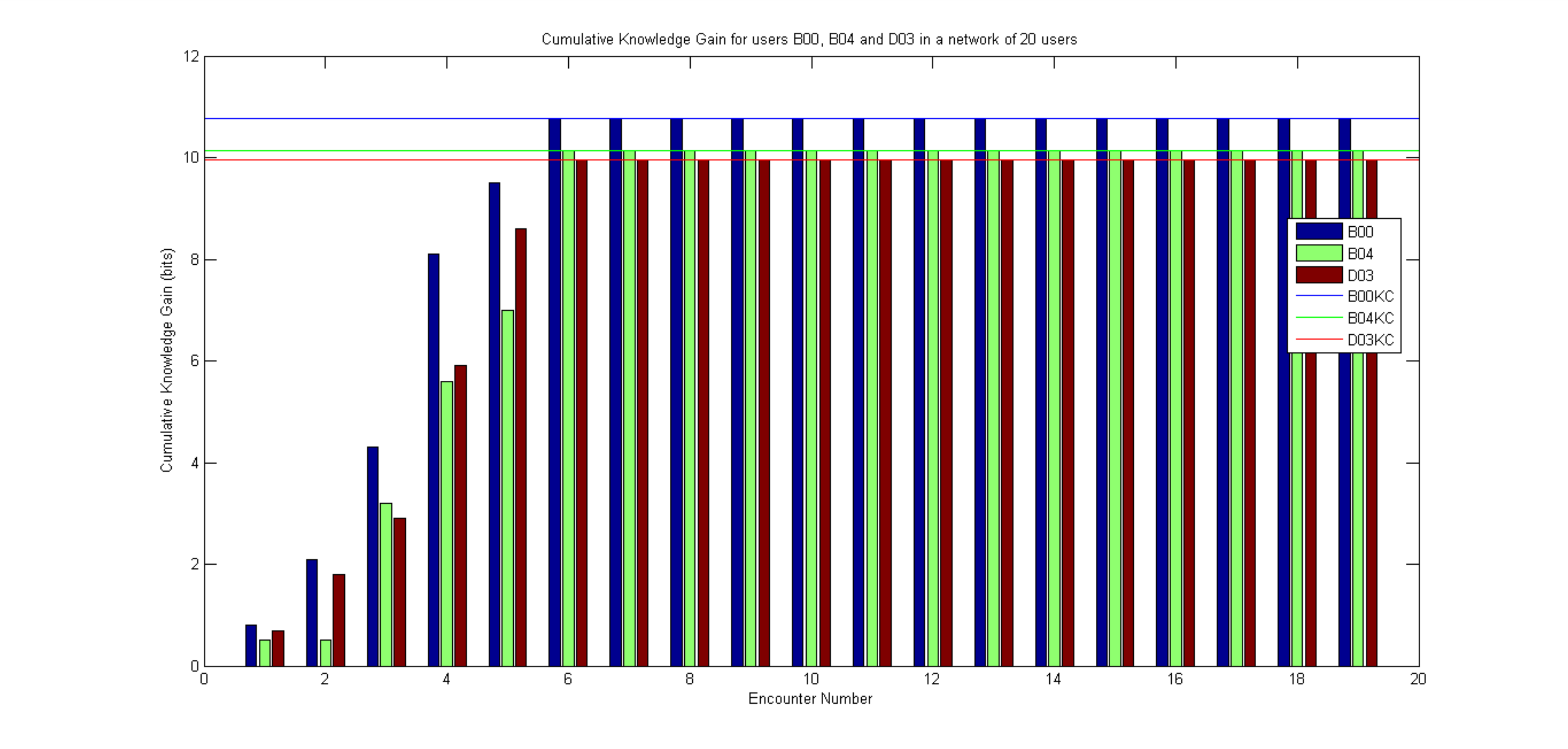}}
%    \includegraphics[width=8.1cm ,height=5.6cm]{figures_eps/B04_SSHOP_MO.eps}
%\centerline{(a) \hspace{9 cm} (b)}
%  \includegraphics[width=8.1cm ,height=5.6cm]{figures_eps/D03_SSHOP_MO.eps}
%  \centerline{(c)}
    \caption{Cumulative knowledge gain for three users in a single-hop network under FMPO.}\label{fig:B00_SSHOP(MO)}
\end{figure}
%\begin{figure}[!bp]
%\centering
%    \includegraphics[width=12cm ,height=7cm]{figures_eps/B04_SSHOP_MO.eps}
%    \caption{Cumulative knowledge gain for user B04 in a fixed topology single-hop network under FMPO.}\label{fig:B04_SSHOP(MO))}
%\end{figure}
%    \begin{figure}[!bp]
%\centering
%    \includegraphics[width=12cm ,height=7cm]{figures_eps/D03_SSHOP_MO.eps}
%    \caption{Cumulative knowledge gain for user D03 in a fixed topology single-hop network under FMPO.}\label{fig:D03_SSHOP(MO))}
%\end{figure}
%=================================================================================================================
%\item Results for Stationary Indirectly Connected (Multi-Hop)- Mine Only Network: 
\vspace{-0.1 cm}
\subsection{Fixed Topology Multi-hop Networks}
%\subsection{Fixed Topology, Multi-hop Similarity-based Networks}
\vspace{-0.1 cm}

In this section, we shift our attention to fixed topology multi-hop networks 
where we study the cumulative knowledge gain behavior under the SMO and FMPO policies.

As indicated earlier, and shown formally in Proposition 4, achieving the KL of an arbitrary node using SMO is fundamentally limited by the single-hop neighborhood size of this node, denoted $N$.
%For this kind of network, any node is limited in achieving the capacity by the neighborhood size. 
To this end, we generate $20$ randomly generated topologies of uniformly distributed users whereby each user has 6-7 single-hop neighbors, out of $20$ nodes, on the average. It can be noticed that $B00$ 
%, namely $B03$, $B05$, $B09$, $B11$, $B07$, $D01$, $D03$, $D05$, 
does not achieve the KL available for it in this network, as established in Proposition 4 and shown here using real smartphone user behavior traces in Fig. ~\ref{fig:B00_SMHOP)}. Thus, the maximum KG node $B00$ can achieve is only 43\% of its KL. Similarly, users $B06$ and $D00$ have single-hop neighbors strictly less than $M=20$ and, hence, cannot achieve their respective KLs. 
%This is shown numerically in Fig.~\ref{fig:B00_SMHOP)}.
%
\begin{figure}[!bp]
%\centering
 \centerline{\includegraphics[width=10cm ,height=5.2cm]{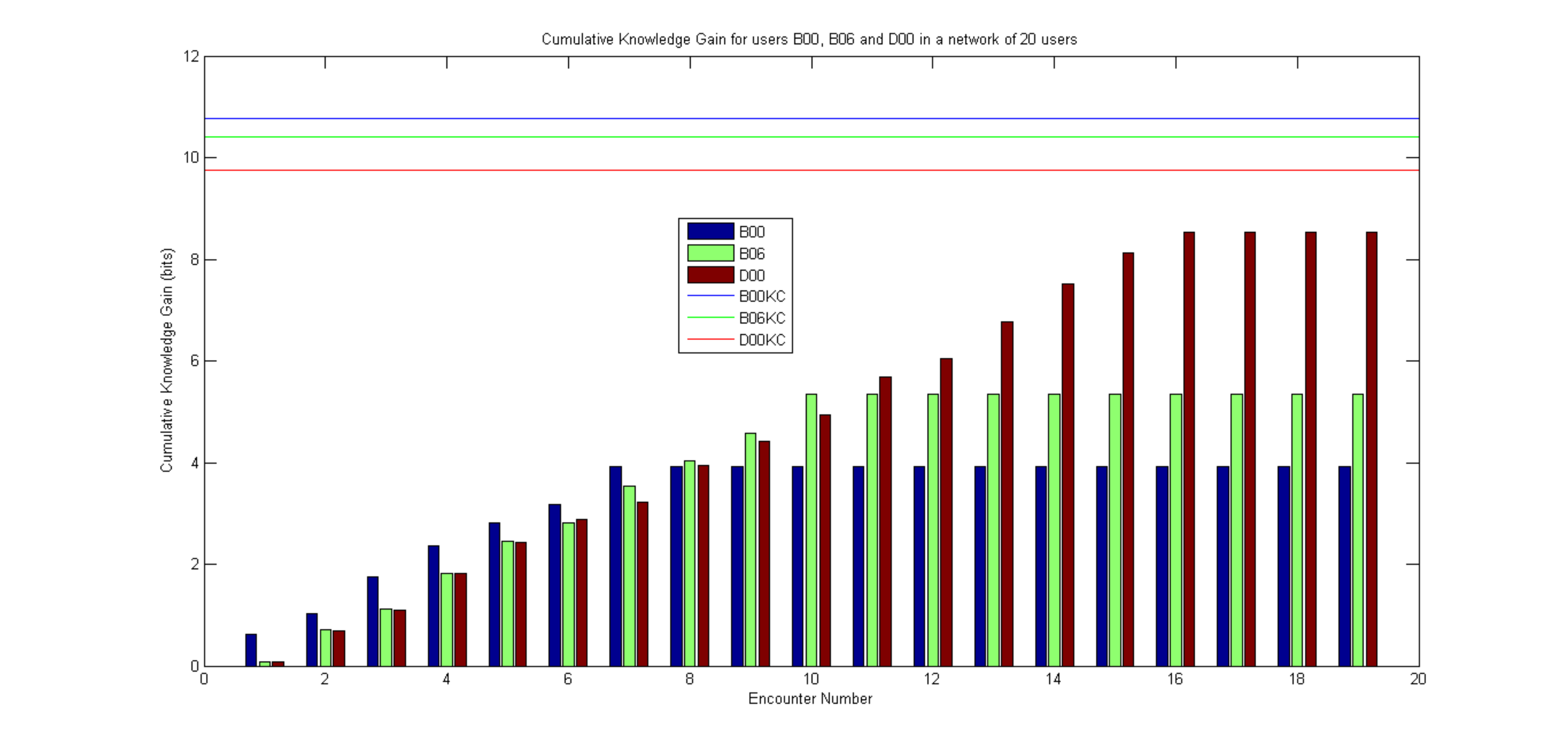}}
% \includegraphics[width=8.1cm ,height=5.6cm]{figures_eps/B06_SMHOP}
% \centerline{(a) \hspace{9 cm} (b)}
% \includegraphics[width=8.1cm ,height=5.6cm]{figures_eps/D00_SMHOP}
% \centerline{(c)}
    \caption{Cumulative knowledge gain for three users in a fixed topology multi-hop network under SMO.}\label{fig:B00_SMHOP)}
\end{figure} 
%\begin{figure}[!bp]
%	\centering
%     \includegraphics[width=12cm ,height=7cm]{figures_eps/B06_SMHOP}
%    \caption{Cumulative knowledge gain for user B06 in a fixed topology multi-hop network under SMO.}\label{fig:B06_SMHOP)}
%\end{figure}
%\begin{figure}[!bp]
%	\centering
%     \includegraphics[width=12cm ,height=7cm]{figures_eps/D00_SMHOP}
%    \caption{Cumulative knowledge gain for user D00 in a fixed topology multi-hop network under SMO.}\label{fig:D00_SMHOP)}
%\end{figure}
    
%\item Results for Stationary Indirectly Connected (Multi-Hop)- Mine Plus Others' Network: 

Finally, we have shown that FMPO overcomes the limited neighborhood problem in multi-hop scenarios due to forwarding others' knowledge and, hence, nodes achieve the KL, as proven in Proposition 5 and shown in Fig. 4. 
%The numerical results are omitted due to space limitations.

%In the rest of this section, we analyze the KG and KL performance of the FMPO policy under %multi-hop topologies. In this case, the FMPO policy is expected to overcome the limited %neighborhood problem due to sharing others' knowledge and, hence, the nodes could achieve the %knowledge capacity as proven in Theorem 5. The results here are based on the $20$ randomly %generated topologies discussed earlier.
%%The first layer contains node $B00$, the second layer contains ten nodes and the third layer %%contains nine nodes. 
%%It is assumed that a node in one tier could only encounter nodes in the directly lower or %%higher tier. 
%The cumulative knowledge gains for users $B00$, $B06$ and $D00$ are depicted in %Fig.~\ref{fig:B00_SMHOP_MO)}. We notice that the KL is achievable for the three users after %$8$, $9$ and $10$ encounters, respectively.
%
  \begin{figure}[!bp]
%	\centering
\centerline  {   \includegraphics[width=10cm ,height=5.2cm]{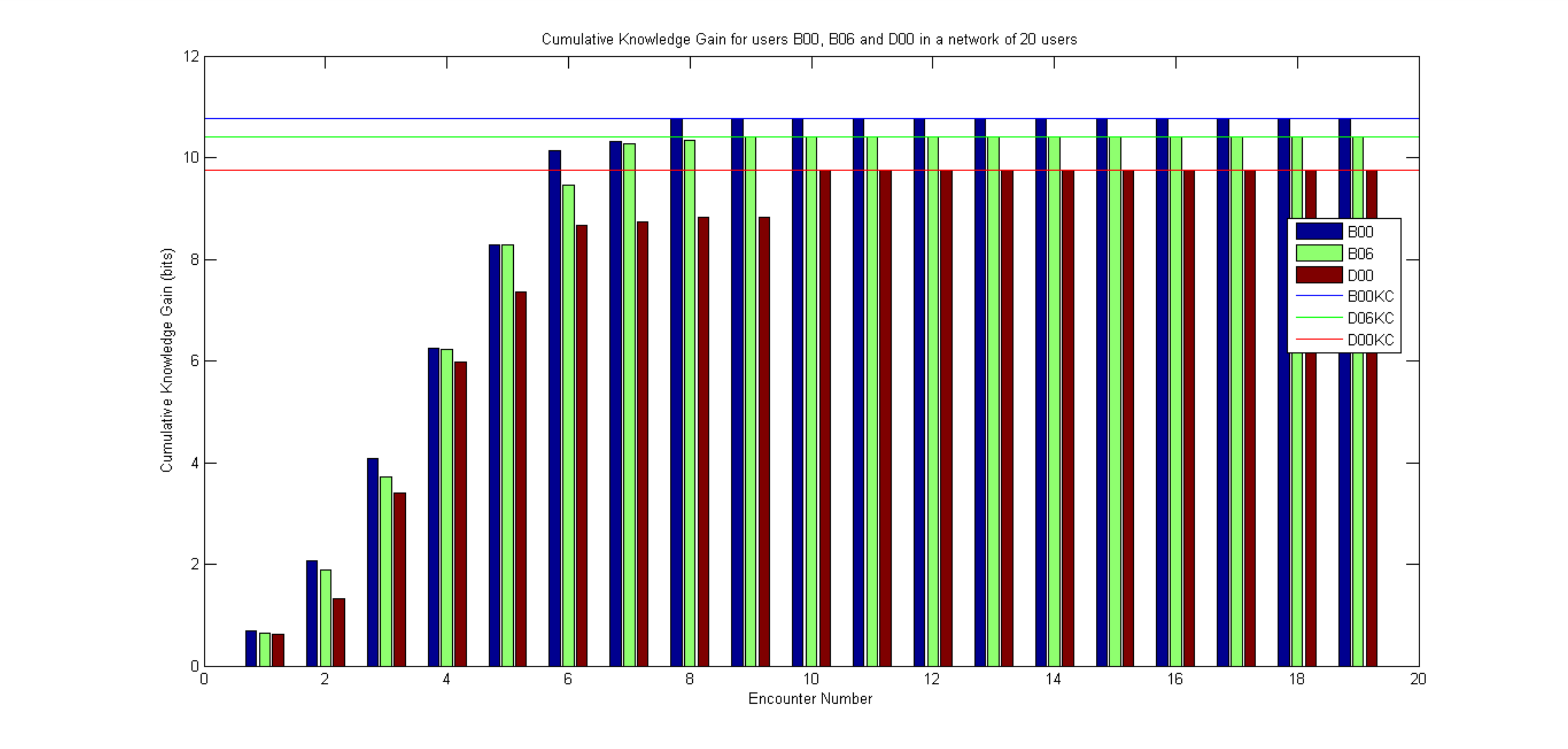}}
%%     \includegraphics[width=8.1cm ,height=5.6cm]{figures_eps/B06_SMHOP_MO}
%% \centerline{(a) \hspace{9 cm} (b)}
%%     \includegraphics[width=8.1cm ,height=5.6cm]{figures_eps/D00_SMHOP_MO}
%%     \centerline{(c)}
    \caption{Cumulative knowledge gain for three users in a fixed topology multi-hop network under FMPO.}\label{fig:B00_SMHOP_MO)}
\end{figure}

\vspace{-0.3 cm}
\section{Conclusion}
In this paper we propose a novel information-theoretic model for knowledge sharing in opportunistic social networks. We introduce the new notions of knowledge gain and its upper bound, knowledge gain limit, and establish fundamental limits and insightful results for two knowledge sharing policies among users meeting opportunistically. We present numerical results characterizing the knowledge gain limit for different users and the cumulative knowledge gain over time, using publicly available traces for smartphone user behavior from the LiveLab project, for fixed single- and multi-hop topologies. This work can 
be extended along multiple directions, e.g., propose efficient knowledge sharing policies and leverage the proposed mathematical model to analyze them and establish fundamental limits for mobile multi-hop network scenarios.

% conference papers do not normally have an appendix

% use section* for acknowledgement
%\section*{Acknowledgment}

%The authors would like to thank...

\vspace{-0.2 cm}
{\footnotesize {
\bibliography{KG_References}
\bibliographystyle{IEEEtran}
}}

% that's all folks
\end{document}